\documentclass{article}

\usepackage[utf8]{inputenc} 
\usepackage[T1]{fontenc}    
\usepackage{hyperref}       
\usepackage{url}            
\usepackage{booktabs}       
\usepackage{amsfonts}       
\usepackage{nicefrac}       
\usepackage{microtype}      
\usepackage{xcolor}         

\usepackage{amsthm}
\usepackage{amsmath}
\usepackage{amssymb}
\usepackage{fullpage}
\usepackage{natbib}
\usepackage{amsthm}

\newcommand{\eps}{\varepsilon}
\usepackage{xcolor}

\renewcommand{\flat}{\operatorname{flat}}
\newcommand{\polylog}{\operatorname{polylog}}
\newcommand{\R}{\mathbb{R}}
\renewcommand{\S}{\mathbb{S}}
\newcommand{\E}{\mathbb{E}}
\newtheorem{theorem}{Theorem}
\newtheorem{corollary}{Corollary}
\newtheorem{fact}{Fact}
\DeclareMathOperator{\Attn}{Attn}

\title{Nearly Optimal Attention Coresets}
\author{Edo Liberty, Alexandr Andoni, Eldar Kleiner}
\date{\nonumber}

\begin{document}
\maketitle

\begin{abstract}
We consider the problem of estimating the Attention mechanism in small space, and prove the existence of coresets for it of nearly optimal size. 
Specifically, we show that for any set of unit-norm keys and values $(K,V)$ in $\R^d$, 
there exists a subset $(K',V')$ of size at most $O({\sqrt{d} e^{\rho+o(\rho)}/\epsilon})$ such that 
\[
\left\| \Attn(q,K,V)- \Attn(q,K',V') \right\| \le \eps
\]
simultaneously for all queries whose norm is bounded by $\rho$. 
This outperforms the best known results for this problem. 
We also offer an improved lower bound showing that $\eps$-coresets must have size $\Omega({\sqrt{d} e^{\rho}/\epsilon})$.

\end{abstract}
\section{Introduction}
Modern transformer-based generative models~\cite{vaswani2017attention,achiam2023gpt,Claude,reid2024gemini}
make heavy use of the attention mechanism. 
Each attention head typically retains a large KV-cache (set of keys and values), that serves as the working memory and context of the model during generation~\cite{shazeer2019mqa,pope2023efficiently}.
As context windows grew, the size of this cache become one of the dominant bottlenecks of deploying large language models~\cite{kaplan2020scaling,reid2024gemini,fu2024data,kwon2023efficient,sheng2023flexgen}.
Compressing KV-caches is therefore a central problem for the practical deployment of long-context models.
There has already been substantial effort in this direction (see Section~\ref{sec:related} for further related work).

We study the question of optimal KV-cache compression with error guarantees.
A powerful general technique for compression is {\em coreset} selection. 
It is widely used for compressing large datasets while provably preserving quality for a rich families of queries~\cite{karnin2019discrepancy,PT20,CKW24,GLPW16}.
These use deep existential results in discrepancy theory which are recent algorithmic breakthroughs~\cite{B98,B12,B10,DNTT18,ALS21,BJSS19,KRR23}, and these techniques are now known to yield essentially optimal compression rates for several related objectives.

These include all sums of scalar functions of the dot product and Euclidian distance~\cite{karnin2019discrepancy}. More recently, this approach was also applied to streaming attention approximation~\cite{kochetkova2025streaming}.

Perhaps surprisingly, the optimal size of the coreset for the attention problem has not been known yet, and our goal here is to establish such a bound. Building on the above line of work, we construct a subset of key-value pairs that uniformly approximates the attention output for every query of bounded norm, with a coreset size that nearly matches our information-theoretic lower bound.

\paragraph{Problem definition.}
In the attention computation problem, we are given a set of  $n$ keys and values ($k_i,v_i\in \R^d$) denoted by $K,V\in \R^{n\times d}$.
Without loss of generality, we assume keys are centered and both keys and values are at most norm one. 
For a query vector $q\in \R^d$, and we need to output the ``attention vector'' $\Attn(q,K,V)=A(q,K,V)/B(q,K)$, where
\[
A(q,K,V) = \sum_i e^{q^T k_i}v_i \mbox{ \;\; and \;\; } B(q, K) = \sum_i e^{q^T k_i}.
\]
To simplify the presentation we remove a customary (but superfluous) $1/\sqrt{d}$ factor from the exponents (see Section~\ref{notation} for precise setup).

A {\em coreset} is a subset $K',V'$ of $K,V$ respectively, of size $S$ such that, for some error tolerance $\eps>0$, and for any query $q\in \R^d$ (or some more restricted class):
\begin{equation}\label{eqn:AttnApprox}
\| \Attn(q,K,V) -  \Attn(q,K',V') \|_2 \le \eps.
\end{equation}

\paragraph{Our contributions.} Our main result is the following upper bound on the smallest coreset for approximating $\Attn(q, K,V)$. 

\begin{theorem}[See full Theorem~\ref{thm:main}]\label{thm:mainUB}
For any $\rho,\eps>0$, any $(K,V)$ of unit-norm vectors, there exist subsets $(K',V')$ of size $S=O(\frac{\sqrt{d}}{\eps} \cdot e^{\zeta})$ 
such that Eqn.~\eqref{eqn:AttnApprox} is satisfied simultaneously for all $q \in \R^d$ of norm at most $\rho$.
\end{theorem}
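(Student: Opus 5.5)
The plan is to reduce the uniform attention guarantee to two discrepancy-type guarantees: one for the numerator $A(q,K,V)$ and one for the denominator $B(q,K)$. We then halve the set repeatedly via signed colorings.

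\emph{Step 1: reduction.} If $K',V'$ is a subset of size $S$, set $\tilde A = \frac{n}{S}A(q,K',V')$ and $\tilde B=\frac nS B(q,K')$. Then
\[
\Attn(q,K,V)-\Attn(q,K',V') = \frac{A-\tilde A}{B} + \tilde A\Bigl(\frac{1}{B}-\frac{1}{\tilde B}\Bigr).
\]
Since $\|\tilde A\|\le \tilde B$ (values have unit norm), the error is at most $(\|A-\tilde A\|+|B-\tilde B|)/B$. Because keys are centered, Jensen gives $B\ge n e^{q^T\bar k}=n$. It therefore suffices to guarantee $\|A-\tilde A\|,|B-\tilde B|\le \eps n/2$ for all $\|q\|\le\rho$.

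\emph{Step 2: halving by discrepancy.} I will find signs $\sigma\in\{\pm1\}^n$ with $\sum_i\sigma_i=0$ such that $\sup_{\|q\|\le\rho}\|\sum_i\sigma_i e^{q^Tk_i}v_i\|$ and the corresponding scalar sum are both at most $D=O(\sqrt d\, e^{\rho+o(\rho)})$. The approach is to expand $e^{q^Tk}=\sum_{t}\frac{(q^Tk)^t}{t!}$ and truncate at degree $T\approx e\rho+\log(1/\eps)$, so the tail is negligible. The vector sum is then a linear functional of the tensors $k_i^{\otimes t}\otimes v_i$, weighted by $\rho^t/t!$. Applying Banaszczyk's theorem, in the form used by \cite{karnin2019discrepancy}, to vectors in the appropriate norm yields a coloring where $\sum_i\sigma_i k_i^{\otimes t}\otimes v_i$ is controlled uniformly over $q$ and over unit test vectors $u$ for $\langle v,u\rangle$. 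The $\sqrt d$ factor comes from a Gaussian-width / Dudley bound over the ball of $q$ and $u$ in $\R^d$. Summing the weights gives $\sum_t\rho^t/t!=e^\rho$, and the subpolynomial factors from the union over $T$ degrees and from normalization contribute the $e^{o(\rho)}$. The balance constraint $\sum\sigma_i=0$ is added as an extra coordinate.

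\emph{Step 3: iteration.} Keep the $+1$ half and recurse. At a level where $m$ points remain, the error in $A/m$ and $B/m$ grows by $D/m$. Summing over the halvings from $n$ down to $S$ is geometric, dominated by the last level, so the total normalized error is $O(D/S)$. Setting $S=O(D/\eps)$ gives the claimed $O(\frac{\sqrt d}{\eps}e^{\rho+o(\rho)})$.

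The main obstacle is Step 2: obtaining discrepancy $\sqrt d\,e^{\rho}$ rather than something like $\sqrt{d\log n}$ or $d$, uniformly over the continuum of queries. My first attempt is a chaining argument over the query ball using Banaszczyk's sub-Gaussian coloring, in the style of Bansal--Dadush--Garg--Lovett. I would bound the Gaussian width of the set $\{(e^{q^Tk_i}\langle v_i,u\rangle)_i\}$ by $\sqrt d\,e^\rho$ times the largest entry norm, using a Lipschitz-in-$q$ estimate of size $e^\rho$ together with a Dudley integral in dimension $2d$. This avoids any $\log n$ dependence.
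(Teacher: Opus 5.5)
Your plan is correct in outline. Its core, Step 2 in the degree-by-degree form of your first paragraph, is essentially the paper's argument. The paper Taylor-expands $E(q)=\sum_i\sigma_i e^{q^Tk_i}v_i$ and applies Banaszczyk to $x_i=[\flat(k_i^{\otimes m}\otimes v_i)/z(m)]_m$ with summable weights $z(m)=\sqrt{3(m+2)}\log(m+2)$. It then uses the convex body $\{g:\|G_m\|_\star\le c\sqrt{d\log m}\ \forall m\}$. Because $\|\cdot\|_\star$ tests only against $\hat q^{\otimes m}\otimes u$, the net runs over two vectors in $\R^d$, which is where the $\sqrt d$ comes from.

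The genuine difference is your reduction. The paper bounds $\|A'/B'-A/B\|\le 2(\|\delta_A\|+|\delta_B|)/B$ level by level. This needs $B\ge n/2$ and $|\delta_B|\le B/2$ for every \emph{intermediate} subset. That is why the paper also balances $\sum_i k_i$ (Fact~\ref{k_balance}) simultaneously with $A$ and $B$ (Theorem~\ref{thm:simbal} with $s=3$): it keeps each subset nearly centered.

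You instead telescope the numerator and denominator errors back to the input and divide once, by the original $B\ge n$, using only $\|\tilde A\|\le\tilde B$. This needs only the input to be centered and only $A,B$ to be balanced, so the bookkeeping is simpler. Your constraint $\sum_i\sigma_i=0$ is unnecessary, and Banaszczyk would only give $|\sum_i\sigma_i|=O(1)$ anyway. Since $\Attn$ is invariant under a common rescaling, take $\tilde A=2^LA^{(L)}$ and $\tilde B=2^LB^{(L)}$, as the paper's identity $A+E=2A'$ does. The telescoped error is then at most $2^LD\le nD/S$ whenever each step keeps at most half the points.

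Two details in Step 2 need care.

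\emph{Truncation.} If you truncate at $T\approx e\rho+\log(1/\eps)$ and normalize the $T+1$ blocks uniformly, you pay a factor $\sqrt T$. That is an extra $\sqrt{\log(1/\eps)}$, which is not in the statement. Also, the tail of a \emph{signed} sum over $m$ points is only $O(m\tau)$ with $\tau=\sum_{t>T}\rho^t/t!$. So it is not negligible per level; bound it once, for $A$ and $\tilde A$ separately. Summable degree weights such as $z(m)$ avoid both problems.

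\emph{Which Gaussian width.} In your last paragraph, the width that controls the discrepancy is that of the test set $\Theta=\{\theta_{q,u}\}$ in the weighted tensor space, where $\langle\theta_{q,u},x_i\rangle=e^{q^Tk_i}\langle v_i,u\rangle$ and $\|x_i\|\le 1$. This is the dual of the paper's Gaussian-measure bound for $\mathcal K$. It is not the width of $\{(e^{q^Tk_i}\langle v_i,u\rangle)_i\}\subset\R^n$, which is already of order $\sqrt n$ at $q=0$.

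Done that way, and jointly over all degrees rather than degree by degree with the triangle inequality as in the paper, your Dudley route gives a sharper bound. Take $a_t^2=e^{-\rho}\rho^t/t!$, $x_i=[a_t\,k_i^{\otimes t}\otimes v_i]_t$ and $\theta_{q,u}=[q^{\otimes t}\otimes u/(t!\,a_t)]_t$. Then $\|x_i\|\le 1$, $\|\theta_{q,u}\|^2=e^{\rho+\|q\|^2/\rho}\le e^{2\rho}$, and the $q$-Lipschitz constant is $O(e^{\rho})$. Dudley in dimension $2d$ then gives discrepancy $O(\sqrt{d\log\rho}\,e^{\rho})$ for $\rho\ge 2$. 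The paper's bound is $\sqrt d\,e^{\zeta}=\sqrt d\,e^{\rho}\rho^{1/2}\polylog(\rho)$.
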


We also prove a matching lower bound on the coreset size:
\begin{theorem}[See full Theorem~\ref{thm:coresetMainLB}]\label{thm:mainLB}
Fix any $\rho,\eps>0$. Any coreset for $(K,V)$ cache of unit-norm vectors, satisfying Eqn.~\eqref{eqn:AttnApprox} for queries of norm at most $\rho$, must have size $S=\Omega(\frac{\sqrt{d}}{\eps} \cdot e^{\rho})$.
\end{theorem}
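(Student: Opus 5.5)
We prove Theorem~\ref{thm:mainLB} by building one hard instance $(K,V)$ and showing that any small subset fails for some query of norm at most $\rho$. We combine two effects. The factor $e^{\rho}$ comes from a concentration argument: a well-chosen query can make the softmax weight concentrate on a set of keys whose share of the total mass is about $e^{-\rho}$ times what a uniform subset would allot it. The factor $\sqrt{d}/\eps$ comes from a discrepancy-type argument over orthogonal directions.

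\textbf{Instance.} We use $m=\Theta(d)$ orthogonal directions $u_1,\dots,u_m$, keys placed near $\pm u_j$ (so that the keys are centered), and values $v_i=\pm u_j$ with signs chosen adversarially. Queries have the form $q=\rho\,\sigma$, where $\sigma$ is a unit vector averaging a signed subset of directions. The plan has three steps.

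\begin{enumerate}
\item \textbf{Reduce to a counting statement.} Any subset $(K',V')$ induces a reweighting of the $n$ points, namely uniform weights on the retained points, normalized by $B(q,K')$. We write $\Attn(q,K,V)-\Attn(q,K',V')$ as a difference of two normalized signed sums.
\item \textbf{Get the $1/\eps$ scaling.} With a single direction and $N$ copies of each key, the retained fraction per key must be reproduced to precision $\eps$ after softmax weighting. A coreset of size $S$ can only represent fractions in multiples of $1/S$. Choosing the multiplicities so that the true fraction lies strictly between consecutive multiples already forces $S\gtrsim 1/\eps$.
\item \textbf{Get the $e^{\rho}$ factor.} We place a heavy cluster of keys at $-u$ together with a light cluster at $+u$ whose count is about $e^{-2\rho}$ times larger in mass ratio. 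The query $q=\rho u$ then amplifies the light cluster by $e^{\rho}$ relative to the heavy one, so the light cluster carries constant attention weight. The coreset must therefore resolve that cluster's internal value average to precision $\eps$, yet that cluster holds only about an $e^{-\rho}$ fraction of the effective budget. This pushes the size to about $e^{\rho}/\eps$.
\end{enumerate}

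\textbf{Main obstacle: the $\sqrt{d}$ factor.} We tensorize over $d$ orthogonal blocks. The error vector is a sum of independent per-coordinate errors, each of magnitude about $1/S_j$ in a different orthogonal coordinate. Its $\ell_2$ norm is therefore about $\sqrt{\sum_j (1/S_j)^2}$, and after normalization by $d$ blocks this becomes roughly $\sqrt d/S$. The difficulty is that the query norm budget $\rho$ must be shared across blocks. To keep the full amplification $e^{\rho}$ while addressing all blocks at once, I would use queries $\rho\,\sigma/\sqrt{d}$-type sign patterns together with an averaging argument: if the coreset were smaller than $c\sqrt d e^{\rho}/\eps$, then for a random sign pattern $\sigma$ the expected squared error would exceed $\eps^2$. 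The bound would follow from orthogonality of the per-block rounding errors and a second-moment computation, giving $\E\|\mathrm{err}\|^2 \ge d\cdot e^{2\rho}/S^2$ up to constants.

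Verifying that the blocks do not interfere through the shared normalizer $B(q,K)$ is the step I expect to be hardest. Concretely, the blocks must each contribute a comparable share of $B$ for every $\sigma$. I would ensure this by symmetric $\pm$ placement within each block, so that $B$ is independent of $\sigma$.
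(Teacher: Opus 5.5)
\textbf{The $\sqrt d$ step fails as designed.} That is exactly where the hard content of the theorem lies. Your Steps 2--3 do contain a usable mechanism: a light cluster that a norm-$\rho$ query boosts to constant attention weight must be represented by many retained points, and the heavy-to-light ratio must be preserved. The trouble is that you put the $d$ blocks in \emph{key} space, with keys near $\pm u_j$ and values $\pm u_j$. A query $\rho\sigma$ that reaches all blocks has $\langle \rho\sigma, u_j\rangle=\pm\rho/\sqrt d$, so each block is amplified only by $e^{\rho/\sqrt d}$. Averaging over random $\sigma$ cannot recover the lost factor, because this inner product does not depend on $\sigma$. A query $\rho u_j$ that does get $e^{\rho}$ sees essentially one block, i.e.\ one coordinate of error, so there is nothing to sum in $\ell_2$.

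The symmetric $\pm$ placement you propose to make $B(q,K)$ independent of $\sigma$ also removes the heavy/light asymmetry that Step 3 needs. So the claimed bound $\E\|\mathrm{err}\|^2\ge d e^{2\rho}/S^2$ has no mechanism behind it.

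Separately, ``per-coordinate error about $1/S_j$'' is not automatic. A subset can reproduce a target with small denominator exactly (target $1/2$ needs two points). Moreover, subsets of size at most $S$ realize every fraction $a/s$ with $s\le S$, not only multiples of $1/S$. So a target strictly between consecutive multiples of $1/S$ can be matched far more closely than $1/(2S)$; for example, with $S=10$, the target $0.45$ is within $1/180$ of $4/9$. The rounding obstruction in your Step 2 only bites when the target sits below the coreset's granularity, e.g.\ near $0$.

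\textbf{The missing idea.} Put the $d$ blocks in \emph{value} space under one shared key, and make each coordinate's target of size about $\eps/\sqrt d$. The paper does the former: $k_{i,j}=u_i$ and $v_{i,j}=b_{ij}x_{ij}e_j$, so the single query $\rho u_i$ amplifies all $d$ coordinates by $e^{\rho}$. Within your direct framework this closes the argument as follows.
\begin{itemize}
\item Take $N_h$ heavy points with a unit key $w\perp u$ and value $e_0$.
\item Take a light cluster at key $u$ whose $j$-th group ($j\in[d]$) has $s$-many values $\pm e_j$ with net bias $\beta=3\eps\sqrt d$. This needs $\eps\le 1/(3\sqrt d)$, the same regime as the paper's $\eps\le\eps_0/\sqrt d$.
\item Set $N_h=e^{\rho}N_\ell$ up to rounding, so that at $q=\rho u$ the true output is $\frac12 e_0+\frac{\beta}{2d}\sum_j e_j$.
\end{itemize}
Suppose the coreset keeps $s_h$ heavy points and $s_j^{\pm}$ light points with value $\pm e_j$, with $s_\ell=\sum_j(s_j^++s_j^-)$. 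Let $\lambda=s_\ell e^{\rho}/(s_\ell e^{\rho}+s_h)$.
\begin{itemize}
\item The $e_0$-coordinate forces $|\lambda-\frac12|\le\eps$. Hence $s_h\ge 0.6\,e^{\rho}s_\ell$ for $\eps\le 1/8$.
\item If $s_\ell\le\sqrt d/(12\eps)$, each coordinate $\lambda(s_j^+-s_j^-)/s_\ell$ is either $0$ or of magnitude at least $1.5\beta/d$. The target is $\beta/(2d)$, so every coordinate errs by at least $\beta/(2d)$.
\item Summing over coordinates, the $\ell_2$ error is at least $\beta/(2\sqrt d)>\eps$, a contradiction.
\end{itemize}
This yields $S=\Omega(\sqrt d\,e^{\rho}/\eps)$ using a single query and $d_k=2$.

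\textbf{Comparison with the paper.} This repaired argument would be a genuinely different and more elementary route than the paper's \textsc{Indexing} reduction. The paper gets $e^{\rho}$ from $m\approx e^{\rho}/(\eps\sqrt d)$ nearly orthogonal codewords, chosen so that the queried codeword's signal share $L/B$ stays above the per-coordinate error $\eps/\sqrt d$. It extracts that per-coordinate error from the $\ell_2$ guarantee by Markov over $j$. In exchange, the paper's route also gives the $\Omega(d e^{\rho}/\eps)$ bit lower bound for arbitrary sketches.
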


\paragraph{Limits of compression and on the need to bound the norm of $q$.} Without any constraints on or relaxations of the attention problem, no compression is possible.  
The attention mechanism approximates retrieval; the very terminology of ``keys'' and ``values'' is borrowed from associative lookup.
Concretely, attention can be made to reproduce any single value vector almost exactly: place a query in the direction of the corresponding key and scale its norm to make that key dominate the softmax.
Any sketch that supports arbitrary queries must therefore essentially store the full set of values, which seems to rule out any lossy compression.

The escape from this impossibility is to bound the query norm.
A query that ``cherry-picks'' a specific value is necessarily one of large norm: the larger the desired separation in softmax weights, the larger the query norm required.
We therefore seek to compress the cache so that attention is approximated for \emph{all} queries below a fixed norm threshold $\rho$. 
Without such a constraint, the lower bounds such as \cite{kochetkova2025streaming} or our improved Theorem~\ref{thm:mainLB} rule out any compression. 

\paragraph{Discussion of the approach to Theorem~\ref{thm:mainUB}.} Our main technical contribution is to design a coreset for the $A(q, K,V)$ vector;
optimal coresets for $B(q, K)$ are already known and are a direct outcome of \cite{karnin2019discrepancy}.

The result in \cite{karnin2019discrepancy} produces small corsets for general sums of all analytic functions of the shape $\sum_i f(q^Tk_i)$ or $\sum_i  f(\|q-k_i\|)$.
It relies heavily on deep result in discrepancy theory called Banaszczyk's vector balancing theorem \cite{B98}. 
The result in \cite{karnin2019discrepancy} does not naturally extend to vector valued functions (like $A(q,K,V)$). Indeed, such an extension is the main focus of this paper. 

Very recent work gave a streaming coreset contraction \cite{kochetkova2025streaming} based on discrepancy minimization that achieves size $S$ roughly $\tfrac{\sqrt{d}\log n}{\eps} e^{2\rho}$ (when mapped to our setup).
They design an efficient algorithm by applying a breakthrough algorithm for online discrepancy minimization \cite{ALS21}.
They use a linearization of the exponential kernel $e^{z_1^Tz_2} = \psi(z_1)^T\psi(z_2)$ and run Algorithm 1 in \cite{ALS21} on $\psi(k_i)$. 
As is common for operating on infinite-dimensional kernel embeddings $\psi$, the authors use the kernel trick to sidestep it.
The crux of their proof uses Banaszczyk's theorem to show that $\sum_i \psi(k_i)$ is contained in an appropriately small ball. 

In this paper, we surgically modify (and simplify) the result in \cite{karnin2019discrepancy} to prove the existence of smaller coresets.
We too use Banaszczyk's theorem but in a different way. Specifically, we craft a convex body that more tightly bounds certain tensor norms. 
As a result, this paper improves the best known upper bound by a factor of $\log(n)\cdot e^{\rho - o(\rho)}$. We show our bound is a near-optimal coreset size 
by proving a matching coreset lower bound. 

\subsection{Related work}
\label{sec:related}
Work on compressing the KV-cache has, broadly, taken four directions:
\begin{itemize}
\item{\em Floating point compression:}
The first is to keep all key-value pairs but represent each entry using fewer bits.
These include a large number of techniques which are out of scope for this manuscript. 
Notable examples include~\cite{simhash,jegou2010product,gao2024rabitq,zandieh2025turboquantonlinevectorquantization}.
These methods reduce per-entry storage but leave the number of cached tokens unchanged.

\item{\em Distillation:}
A second line replaces the original keys and values with a different, smaller set that mimics the same attention behavior.
Examples include clustering keys and values to a small set of centroids~\cite{zandieh2024subgen}, learning compact representations of long contexts~\cite{eyuboglu2506cartridges}, prefix and prompt tuning~\cite{li2021prefix}, and attention-matching distillation~\cite{zweiger2026fast}. The substituted set need not be a subset of the original vectors.
The approaches require access to the query distribution to avoid degrading quality. 

\item{\em Linearization:}
A third line abandons the explicit cache altogether and replaces the attention mechanism with a linear-time surrogate based on random or learned feature maps~\cite{rahimi2007randomfeatures,katharopoulos2020transformers,choromanski2021performer,wang2020linformer,schroder2026wildcatnearlinearattentiontheory}, or with state-space and recurrent alternatives~\cite{gu2024mamba,beck2024xlstm}. Such models trade exact softmax attention for fixed-size state but require architectural changes and retraining.

\item{\em Coresets selection:}
A fourth direction, and the one taken in this paper, is \emph{coreset} construction: keep a small subset of the original key-value pairs and discard the rest.
Heuristics in this family include attention-sink retention~\cite{xiao2024streamingllm}, accumulated-attention scoring~\cite{zhang2023h2o,zhang2023h2o,li2024snapkv,liu2024scissorhands,liu2024scissorhands}, layer- and head-adaptive budget allocation~\cite{cai2024pyramidkv,fu2024not,xiao2025duoattention}, query-aware sparsity~\cite{tang2024quest}, model-guided eviction~\cite{ge2024fastgen}, sparse long-context patterns~\cite{beltagy2020longformer}, and short-window memorization~\cite{cabannes2025short}. A recent line gives provable guarantees for coreset-style attention approximation via discrepancy theory~\cite{kochetkova2025streaming,zandieh2024subgen,zandieh23kdeformer,hyperattention}.
\end{itemize}

\paragraph{Computing attention for all queries.}
A related but distinct line studies the cost of computing the full attention matrix-vector product for $n$ queries, rather than compressing a fixed KV-cache for future queries.
Exact implementations such as FlashAttention optimize memory traffic and hardware utilization but still compute the dense attention pattern~\cite{dao2022flashattention}.
On the theory side, subquadratic or near-linear algorithms are possible under additional smoothness, bounded-entry, or approximation assumptions, often via kernel-density estimation, sampling, or kernel data structures~\cite{zandieh23kdeformer,hyperattention,CKW24,schroder2026wildcatnearlinearattentiontheory}.

We highlight the work of~\cite{alman2023fastattentionboundedentries} who parametrize the complexity of the problem as a function of the size of the entries of $K,V,$ and $q$'s, similar to our bound on norm $\rho$.
They
identify a sharp bounded-entry transition: for $d=O(\log n)$ and entries below $\sqrt{\log n}$, attention can be approximated in near-linear time, while entries of order $\sqrt{\log n}$ lead to SETH-based barriers against truly subquadratic algorithms.
Our setting is complementary: we ask how many key-value pairs must be retained so that a compressed cache answers every bounded-norm query, independent of the number of future queries.

\section{Setup and notation}\label{notation}

\noindent {\bf Unit normed values:} We assume that $\|v_i\| \le 1$, without loss of generality. 
This is achieved by scaling $v$ by $\max_i \|v_i\|_2$. The attention vector, and the eventual norm of the error, scales linearly with the same factor. 
The results from \cite{kochetkova2025streaming} obtain bounds that depend on $\sum_i \|v_i\|_2$ which is could be more general. 
They split the $K,V$ set in logarithmic number of bins whose value norms are at most a constant factor away from one another. 
This obviously incurs an additional logarithmic factor in the coreset size. For brevity and simplicity, we do not add this discussion to this paper. 

\noindent {\bf Nearly centered keys:} We assume $\|\sum_i k_i\| \le c\sqrt{d}\log(n)$ without loss of generality. 
Translating keys by any vector does not change the attention expression.
Specifically, subtracting $\frac{1}{n}\sum_i k_i$ from each key yields $\|\sum_i k_i\| = 0$ and minimizes $\sum_i \|k_i\|^2$ which should heuristically work in our favor.
In the worst case, it can could double the norm of some keys.

\noindent {\bf Unit normed keys:} We also assume $\|k_i\| \le 1$ without loss of generality, since the attention is a function of $q^T k_i$.
Hence, dividing all key vectors by $\max_i{\|k_i\|}$ and multiplying queries by the same factor changes nothing.

\noindent {\bf Query norms:} Throughout, $\rho$ denotes the maximal radius $\|q\| \le \rho$ to which the bounds hold. 
The coreset size and discrepancy bounds will depend on a slightly larger radius $\zeta := \rho + \frac{1}{2}\log(\rho) + o(\log\log{\rho})$.

\noindent {\bf No temperature $\sqrt{d}$ rescaling:} The standard notation uses weights $e^{q^Tk_i/\sqrt{d}}$. 
For the same reason as above, we can absorb the $\sqrt{d}$ factor in the exponent into the norms of $q$. This only simplifies the notation.
Therefore, readers who are used to the standard notation should think of our query vectors $q$ as $q \cdot \max_i \|k_i\|/\sqrt{d}$.
As a practical note, $k_i$ are usually initialized to be random Gaussians. 
So $\|k_i\|$ are typically $\sqrt{d}$ and $\max_i \|k_i\|/\sqrt{d}$ can roughly be thought of as a constant. 

\noindent{\bf Tensor notations:}
We use $X = x^{\otimes m}$ to denote $d$-dimensional $m$-tensor of $x$ outer-producted with itself $m$-times $X[i_1,i_2,\ldots,i_m] = x_{i_1} x_{i_2} \cdots x_{i_m}$. 
By convention $x^{\otimes 0} = 1$ is a scalar or a zero-dimensional tensor.
We use $\flat(X)$ to denote a $d^m$ long vector containing the individual entires of $X$. Finally, we use $\langle X, Y\rangle := \flat(X)^T\flat(Y)$.

\noindent{\bf Constants:} We use $c$ to indicate universal constants but reuse the same symbol for different constants.

\section{Statement of main results and proof outline}

We begin by stating the main result from Section~\ref{sec:numerator}. 
It informally says that one can halve the cache size and incur an error that is independent of $n$.
\begin{theorem}[Proof in Section~\ref{sec:numerator}]\label{thm:A}
For any $(K,V)$ there exist subsets $(K',V')$ of at most half its size 
such that simultaneously for all $\|q\| \le \rho$:
\begin{equation}
\|A(q,K,V) - 2A(q,K',V')\| \le c\sqrt{d}\cdot e^{\zeta}.
\end{equation}
\end{theorem}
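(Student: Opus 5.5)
Our plan is to obtain $(K',V')$ from a signing $\sigma\in\{\pm1\}^n$ and bound the discrepancy $\sum_i \sigma_i e^{q^Tk_i}v_i$ uniformly over $\|q\|\le\rho$. Given $\sigma$, we let $K',V'$ be the pairs with $\sigma_i=+1$, flipping all signs if needed so this is the smaller side. Then
\[
A(q,K,V)-2A(q,K',V') = -\sum_i \sigma_i e^{q^Tk_i}v_i .
\]
An exact halving (up to one element) can be imposed by adding a coordinate to each vector below, or by pairing elements. So it suffices to find $\sigma$ with $\|\sum_i\sigma_i e^{q^Tk_i}v_i\|\le c\sqrt d\, e^{\zeta}$ for all $\|q\|\le\rho$.

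The first step is to linearize with the Taylor expansion
\[
e^{q^Tk}v=\sum_{m\ge0}\frac{1}{m!}\bigl\langle q^{\otimes m},k^{\otimes m}\bigr\rangle v .
\]
For a unit vector $u$, the quantity $u^T\sum_i\sigma_i e^{q^Tk_i}v_i$ then becomes $\sum_m \frac{1}{m!}\langle q^{\otimes m}\otimes u,\ \sum_i\sigma_i k_i^{\otimes m}\otimes v_i\rangle$. We associate to each $i$ the (block) vector $w_i=(\alpha_m\,\flat(k_i^{\otimes m}\otimes v_i))_{m\ge0}$, with weights $\alpha_m$ chosen so that $\sum_m\alpha_m^2\le 1$, so that $\|w_i\|\le 1$. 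A natural choice is $\alpha_m^2\propto \zeta^{m}/(m!\,\cdot\text{poly})$-type weights, truncated at $m\approx e^2\rho$ plus a tail. The discrepancy then reads $\sum_m \frac{1}{m!\alpha_m}\langle q^{\otimes m}\otimes u, Y_m\rangle$, where $Y_m$ is the $m$-th block of $\sum_i\sigma_iw_i$.

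The second step is to apply Banaszczyk's theorem. It says that for any convex body $\mathcal K$ of Gaussian measure at least $1/2$, there is $\sigma$ with $\sum_i\sigma_iw_i\in c\mathcal K$. We choose
\[
\mathcal K=\Bigl\{Y:\ \sup_{\|x\|=\|u\|=1}|\langle x^{\otimes m}\otimes u, Y_m\rangle|\le \beta_m\ \forall m\Bigr\}.
\]
This is the key "tighter convex body": it bounds the injective tensor norm of each block, rather than the Euclidean norm that \cite{kochetkova2025streaming} effectively uses. Its Gaussian measure follows from the fact that, for a Gaussian tensor $G$ of order $m+1$ on $\R^d$, $\E\sup_{x,u}\langle x^{\otimes m}\otimes u,G\rangle \lesssim \sqrt{(m+1)d\log(m+1)}$ or so, by an $\epsilon$-net or Gaussian-process (Sudakov/Dudley) bound. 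Adding Gaussian concentration lets us choose $\beta_m\approx c\sqrt{md}+\sqrt{\log(\text{index})}$ with union bound failure probabilities summing to at most $1/2$.

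The last step is to sum the series. On the chosen body, the discrepancy is at most $\sum_m \frac{\rho^m}{m!\,\alpha_m}\,c\beta_m$. With $\alpha_m^2=\rho^m/(m!\,e^{\rho})$ (so $\sum\alpha_m^2=1$), each term becomes $\sqrt{e^\rho\rho^m/m!}\,\beta_m$. We then sum this using Cauchy–Schwarz against Poisson weights concentrated at $m\approx\rho$. This yields $e^{\rho}$ times a polynomial factor in $\rho$ times $\sqrt d$, and the polynomial factor is absorbed into $e^{\zeta}$ with $\zeta=\rho+\frac12\log\rho+o(\cdot)$.

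\textbf{Main obstacle.} The main obstacle is this balancing step: choosing $\alpha_m$ and $\beta_m$ so that the $\sqrt{m}$ growth of the tensor-norm bound and the Cauchy–Schwarz loss cost only $e^{\frac12\log\rho+o}$ rather than another $e^{\rho}$. We expect the $\sqrt{d}$-without-$\sqrt{m}$ uniform bound on the injective norm to need care; we would first try a net over $x$ only at the top order, using symmetry of $x^{\otimes m}$.
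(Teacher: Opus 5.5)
\textbf{Gap in the final summation step.} Your architecture is the paper's: signed halving, Taylor expansion into blocks $k_i^{\otimes m}\otimes v_i$ weighted so that $\sum_m\alpha_m^2\le 1$, Banaszczyk applied to a body that tests block $m$ only against tensors $x^{\otimes m}\otimes u$, and a union bound over $m$. However, the estimate that makes the theorem true is missing, and the one you write down is too weak. With $\alpha_m^2=P_\rho(m):=e^{-\rho}\rho^m/m!$, your final bound is $5e^{\rho}\sum_m\sqrt{P_\rho(m)}\,\beta_m$. Two facts matter here: $\sum_m\sqrt{P_\rho(m)}=\Theta(\rho^{1/4})$, and the Poisson mass sits at $m=\rho\pm O(\sqrt{\rho})$. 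Hence the choice $\beta_m\approx c\sqrt{dm}$, or the $\sqrt{dm\log m}$ you quote for the expectation, gives $\Theta(\sqrt{d}\,e^{\rho}\rho^{3/4})$ up to a $\sqrt{\log\rho}$ factor. This exceeds $e^{\zeta}=e^{\rho}\rho^{1/2+o(1)}$ by roughly $\rho^{1/4}$, so your claim that the polynomial factor is absorbed into $e^{\zeta}$ is false as written. The $\sqrt{m}$ you carry is the cost of the full injective norm, i.e.\ a net over $m+1$ independent unit vectors as in the random-tensor spectral-norm bound. That is not the quantity your body actually constrains.

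\textbf{The missing ingredient.} You name it as the main obstacle but leave it open; it is the paper's key lemma. For a Gaussian block, $\|G_m\|_{\star}=\max_{\hat q,u}\langle\hat q^{\otimes m}\otimes u,G_m\rangle\le c\sqrt{d\log(m+2)}$, except with probability at most $1/(20(m+1)^2)$. The test tensors are parametrized by only two unit vectors in $\R^d$, and telescoping gives $\|x^{\otimes m}\otimes u-y^{\otimes m}\otimes w\|\le m\|x-y\|+\|u-w\|$. So the index set has covering numbers $\log N(\epsilon)\le 2d\log(6m/\epsilon)$, and Dudley's integral yields $\E\|G_m\|_{\star}=O(\sqrt{d\log(m+2)})$. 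Since the supremum is $1$-Lipschitz in $G_m$, Gaussian concentration pays for the union bound over $m$. This is the net-over-$x$ idea you gesture at, but it is the crux of the argument, not a detail.

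\textbf{What your weighting buys once the gap is filled.} With $\beta_m=c\sqrt{d\log(m+2)}$ the proof closes, and your $\rho$-dependent Poisson weights actually beat the paper's $\rho$-independent choice $\alpha_m=1/z(m)$, where $z(m)=\sqrt{3(m+2)}\log(m+2)$. The paper obtains $\sqrt{d}\,e^{\rho}\sqrt{\rho}\log^{3/2}\rho$. Your weighting gives $O(\sqrt{d}\,e^{\rho}\rho^{1/4}\sqrt{\log\rho})$, because only the $O(\sqrt{\rho})$ orders near $m=\rho$ need any of the unit budget.
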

We remind the reader that here, and throughout $\zeta := \rho + \frac{1}{2}\log(\rho) + o(\log\log{\rho})$.
This is also true for $B(q,K,V)$. 
One can see that by using Theorem~\ref{thm:A} and setting all $v_i$ to $[1,0,0,...]$. But, it is also a direct outcome of \cite{karnin2019discrepancy}.

\begin{corollary}\label{cor:B}
For any $(K,V)$ there exist subsets $(K',V')$ of at most half its size 
such that simultaneously for all $\|q\| \le \rho$
\begin{equation}
\|B(q,K) - 2B(q,K')\| \le c\sqrt{d}\cdot e^{\zeta}.
\end{equation}
\end{corollary}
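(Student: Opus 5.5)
Corollary~\ref{cor:B} is obtained by specializing Theorem~\ref{thm:A} to constant values. Given $(K,V)$, form the auxiliary cache $(K,\bar V)$ with the same keys and all values equal to $\bar v_i := e_1 = [1,0,\ldots,0]^T$. These values have unit norm, so the hypotheses of Theorem~\ref{thm:A} hold for $(K,\bar V)$.

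\emph{Step 1: apply Theorem~\ref{thm:A}.} It yields an index set $I$, with $|I|\le n/2$, such that the subsets $(K_I,\bar V_I)$ satisfy
\[
\|A(q,K,\bar V)-2A(q,K_I,\bar V_I)\|\le c\sqrt d\, e^{\zeta}
\]
for every $\|q\|\le\rho$.

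\emph{Step 2: identify $A$ with $B$.} With all values equal to $e_1$,
\[
A(q,K,\bar V)=\Big(\sum_i e^{q^Tk_i}\Big)e_1=B(q,K)\,e_1,
\]
and likewise $A(q,K_I,\bar V_I)=B(q,K_I)\,e_1$. The norm in Step~1 is therefore exactly the scalar quantity
\[
\big|B(q,K)-2B(q,K_I)\big|\cdot\|e_1\| = \big|B(q,K)-2B(q,K_I)\big|,
\]
which is the desired bound.

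\emph{Step 3: transfer back to the original values.} Set $K'=K_I$ and $V'=V_I$, taking the same indices in the original cache. Since $B$ depends only on the keys, the bound from Step~2 is exactly the claim of Corollary~\ref{cor:B}. The subset has at most half the size, and the guarantee holds simultaneously for all $\|q\|\le\rho$ because the guarantee of Theorem~\ref{thm:A} does.

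The only point needing care is that Theorem~\ref{thm:A} must be applied to the modified cache $(K,\bar V)$, not to $(K,V)$, and that its subset choice depends only on indices, so it transfers to the original values. There is no real obstacle here. Alternatively, one could cite \cite{karnin2019discrepancy} directly for the scalar function $f(x)=e^{x}$ of the inner product.
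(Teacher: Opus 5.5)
Your proposal is correct and follows the paper's own argument: the paper likewise sets every $v_i$ to $[1,0,0,\ldots]$, notes that $B(q,K)$ is then the first coordinate of $A(q,K,V)$, and inherits the guarantee of Theorem~\ref{thm:A}. It also mentions the alternative of citing \cite{karnin2019discrepancy} directly. Your remark that the chosen index set carries back to the original values, because $B$ depends only on the keys, is a detail the paper leaves implicit.
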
 

\begin{fact}[Proof in Section~\ref{sec:simbal}]\label{k_balance}
For any set of unit normed keys $K$ there exist a subset $K'$ of at most half its size such that 
\begin{equation}
\left\|\sum_{k_i\in K} k_i - 2\sum_{k'_i\in K'} k'_i \right\| \le c\sqrt{d}.
\end{equation}
\end{fact}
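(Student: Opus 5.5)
We will obtain the statement from a signed-balancing (discrepancy) argument via Banaszczyk's theorem. The first thing to note is that a subset $K'$ with $\sum_{K}k_i - 2\sum_{K'}k_i = \sum_i \sigma_i k_i$ corresponds to signs $\sigma_i\in\{\pm1\}$, where $\sigma_i=+1$ for $k_i\notin K'$ and $\sigma_i=-1$ for $k_i\in K'$. So it suffices to find signs with $\|\sum_i \sigma_i k_i\|\le c\sqrt d$ and with at least as many $+1$'s as $-1$'s. If there are more $-1$'s, we simply negate all signs; this does not change the norm.

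\textbf{Step 1: obtain the balanced signs.} Banaszczyk's theorem~\cite{B98} says the following: for vectors of norm at most $1/5$ and any convex body $\mathcal K$ of Gaussian measure at least $1/2$, there are signs with $\sum_i\sigma_i u_i\in\mathcal K$. We apply it to $u_i=k_i/5$, taking $\mathcal K$ to be the Euclidean ball of radius $c_0\sqrt d$. Its Gaussian measure is at least $1/2$, because a standard Gaussian vector has norm concentrated around $\sqrt d$ and $\E\|g\|^2=d$, so Markov's inequality gives $\Pr[\|g\|\le \sqrt{2d}]\ge 1/2$. This yields $\|\sum_i\sigma_i k_i\|\le 5\sqrt{2d}$. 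An alternative is Spencer's/Gluskin's bound, or even the elementary ``partial coloring'' argument; the classical Euclidean vector-balancing bound of $\sqrt d$ would also suffice.

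\textbf{Step 2: the size constraint.} The sign choice above does not directly control the count of $-1$'s, but global negation handles it, so $|K'|\le n/2$ automatically.

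There is a second way to do the size control, which also keeps the count essentially exactly half if desired. Append the coordinate $1$ to each $k_i$, forming $(k_i,1)/\sqrt2$ in $\R^{d+1}$, and run the same argument. This forces $|\sum_i\sigma_i|\le c\sqrt d$ as well.

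We expect the only mildly delicate point to be checking the Gaussian-measure condition with explicit constants; everything else is routine.
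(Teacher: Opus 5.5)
Your proposal is correct and takes essentially the same route as the paper: you apply Banaszczyk's theorem to the keys with a Euclidean ball of radius $O(\sqrt d)$ as the convex body, certify its Gaussian measure by Markov's inequality on $\|g\|^2$, and flip all signs globally to get $|K'|\le n/2$. The only difference is cosmetic. The paper takes radius $\sqrt{10d}$, which has measure at least $9/10$ and leaves slack for the union bound in Theorem~\ref{thm:simbal}; your radius $\sqrt{2d}$ gives measure at least $1/2$, which is enough for this Fact on its own.
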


For the proof to hold, we need all three conditions at the same time. Section~\ref{sec:simbal} and Theorem~\ref{thm:simbal} show this is indeed possible:

\begin{fact}[Proof in Section~\ref{sec:simbal}]\label{fact:simbal}
For any $(K,V)$ there exist subsets $(K',V')$ of at most half its size such that all three conditions hold simultaneously. 
\end{fact}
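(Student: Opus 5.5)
The plan is to encode all three balancing requirements (Theorem~\ref{thm:A}, Corollary~\ref{cor:B}, and Fact~\ref{k_balance}) as a single vector-balancing problem and apply Banaszczyk's theorem once. For each index $i$ we form one concatenated vector $w_i = (u_i, b_i, k_i/c_0)$. Here $u_i$ is the (truncated, rescaled) tensor embedding used in the proof of Theorem~\ref{thm:A}, built from the blocks $\flat(k_i^{\otimes m})\otimes v_i$. The vector $b_i$ is the analogous scalar-valued embedding for $B$, namely the blocks $\flat(k_i^{\otimes m})$, i.e.\ the case $v_i=[1,0,\dots]$. The last block is the key itself. Each of the three proofs presumably ends by exhibiting a symmetric convex body $C_j$ of Gaussian measure at least $1/2$, or by standard scaling at least $1-\delta$ for any fixed $\delta$, such that signs $\sigma$ with $\sum_i\sigma_i w^{(j)}_i\in c\,C_j$ give the $j$-th bound. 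We take the product body $C=C_1\times C_2\times C_3$. If each $C_j$ is scaled so that its Gaussian measure in its own coordinates is at least $0.9$, then by independence of the Gaussian coordinates across blocks $\gamma(C)\ge 0.9^3>1/2$. Banaszczyk's theorem then gives signs $\sigma\in\{\pm1\}^n$ with $\sum_i\sigma_iw_i\in cC$, provided every $\|w_i\|\le 1/5$. This is arranged by scaling each block so that all three are $O(1)$ in norm, at the cost of constant factors.

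The next step is to convert signs into a subset of at most half the size. We use the standard trick: add the vector $(0,0,0,1)$ as an extra coordinate, or equivalently append a balancing block for $\sum_i\sigma_i$, so that $|\sum_i\sigma_i|=O(1)$. Let $K'$ be the minority sign class. Then $\sum_K w_i-2\sum_{K'}w_i=\pm\sum_i\sigma_iw_i$, up to moving $O(1)$ elements. Each moved element contributes at most $e^{\rho}\le e^{\zeta}$ to $A$ and $B$ and at most $1$ to the key sum, so the three error bounds survive. Alternatively, if the individual proofs already handle the cardinality constraint through a coordinate of this kind, we simply include that coordinate once.

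The main obstacle is confirming that the three arguments really do reduce to membership of a signed sum in a symmetric convex body of constant Gaussian measure. In particular, the body used for $A$ (the "crafted" tensor-norm body) must still have measure bounded below after the truncation at degree $m\approx e\zeta$, and its Gaussian measure must not degrade with $n$. If one of the proofs instead uses a union bound or a dimension-dependent scaling, the remedy is to scale each $C_j$ by a constant factor $t_j$ so that $\gamma(t_jC_j)\ge 1-1/10$. This costs only constants, since for symmetric convex bodies $\gamma(tC)\to1$ at a rate depending only on $\gamma(C)$, by a Borell-type (or Latała–Oleszkiewicz) inequality. This keeps every bound at $c\sqrt d\,e^{\zeta}$, or $c\sqrt d$ for the keys, with a new universal $c$.
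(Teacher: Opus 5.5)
Your argument is correct and is essentially the paper's proof of Theorem~\ref{thm:simbal}. Like the paper, you concatenate the three embeddings with a constant rescaling (the paper uses $1/\sqrt{3}$), take the product body (equivalently, the intersection of the dimension-disjoint bodies), lower-bound its Gaussian measure by $1/2$ (you via independence, the paper via a union bound with $\delta_j<1/10$), and apply Banaszczyk once. Your extra cardinality coordinate is harmless but unnecessary: flipping all signs if needed makes the $+1$ class the minority, and for the minority class $K'$ the identity $\sum_{K}w_i-2\sum_{K'}w_i=\pm\sum_i\sigma_i w_i$ is exact, so no elements ever need to be moved.
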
 

Using all three conditions above we are nearly ready to derive the main result. 
The missing ingredient to show that $B(q,K,V)$ is not too small.
Assume $\|\sum_i k_i\| \le c\sqrt{d}\log(n)$.
By Jensen and convexity $\sum_i e^{q^T k_i} \ge n e^{q^T (\frac{1}{n}\sum_i k_i)} \ge ne^{-\|q\|\|\sum_i k_i\|/n} \ge n(1 - c \rho \sqrt{d}\log(n)/n) \ge n/2$ 
for $n \ge 2c \rho \sqrt{d}\log(n)$. In other words, as long as the keys are ``roughly centered" and there aren't too few of them, $B(q,K,V) \ge n/2$.

We are finally ready to combine the ingredients above to prove the following: 

\begin{theorem}\label{thm:main}
For any $(K,V)$ there exist subsets $(K',V')$ of size $O(\frac{\sqrt{d}}{\eps} \cdot e^{\zeta})$ 
such that simultaneously for all $\|q\|\le \rho$
\begin{equation}
\| \Attn(q,K,V) -  \Attn(q,K',V') \| \le \eps.
\end{equation}
\end{theorem}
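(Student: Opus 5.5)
We will prove Theorem~\ref{thm:main} by repeated halving, in the spirit of \cite{karnin2019discrepancy}. Set $(K_0,V_0)=(K,V)$ with $n_0=n$ (keys centered, so $\|\sum_i k_i\|=0$). At round $t$ we apply Fact~\ref{fact:simbal} to $(K_t,V_t)$ and obtain $(K_{t+1},V_{t+1})$ of size $n_{t+1}\le n_t/2$. Each halving satisfies Theorem~\ref{thm:A}, Corollary~\ref{cor:B} and Fact~\ref{k_balance} simultaneously. (If a halving returns strictly fewer than $n_t/2$ points, we pad or treat sizes up to constants; we assume $n_t=n/2^t$ for the sketch.) The rescaling factor $2^t$ then turns these into bounds against the original sums. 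Telescoping gives
\[
\|A(q,K,V)-2^tA(q,K_t,V_t)\|\le \sum_{s<t}2^s c\sqrt d e^{\zeta}\le 2^t c\sqrt d e^{\zeta},
\]
and the same bound holds for $B$. For the mean of the keys, $\|\sum_{K_t}k_i\|\le \sum_{s<t} 2^{s-t}c\sqrt d\le 2c\sqrt d$. Thus $K_t$ stays roughly centered, with a bound independent of $n$, which is even stronger than the $c\sqrt d\log n$ needed.

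Next we convert these additive bounds into a bound on the ratio. Write $A=A(q,K,V)$, $B=B(q,K)$, $A'=2^tA(q,K_t,V_t)$, $B'=2^tB(q,K_t)$. Then
\[
\Big\|\frac{A}{B}-\frac{A'}{B'}\Big\|\le \frac{\|A-A'\|}{B}+\frac{\|A'\|}{B'}\cdot\frac{|B-B'|}{B}\le \frac{2\,\|A-A'\|+|B-B'|}{B}\cdot O(1),
\]
using $\|A'\|\le B'$, which holds since the $v_i$ have norm at most one and the weights are positive. The Jensen argument given before the theorem yields $B\ge n/2$. Hence the error is at most $c\,2^t\sqrt d e^{\zeta}/n=c\sqrt d e^{\zeta}/n_t$. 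We stop at the first $t$ with $n_t\le C\sqrt d e^{\zeta}/\eps$, so that $n_t\ge C\sqrt d e^\zeta/(2\eps)$. The error is then at most $2c\eps/C\le\eps$ for a suitable constant $C$, and the size is $O(\sqrt d e^\zeta/\eps)$. If $n$ is already below this threshold, we take $K'=K$.

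The main obstacle is the lower bound on the denominator and its validity. We need $B\ge n/2$ for the original set. Jensen gives $B\ge n e^{-\rho\|\sum k_i\|/n}=n$ when the keys are exactly centered, so this step is fine. For a cleaner alternative, we would instead need $B'\gtrsim n$. By Jensen on $K_t$, $B'\ge 2^t n_t e^{-\rho\cdot 2c\sqrt d/n_t}$, and this is at least $n/2$ provided $n_t\ge 4c\rho\sqrt d$. That condition holds at our stopping size whenever $e^{\zeta}/\eps\gtrsim\rho$, which is always true for $\eps\le1$ since $e^\zeta\ge\rho$. This is exactly where Fact~\ref{k_balance} enters. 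We will also double-check the bookkeeping when $n_t$ is not exactly $n/2^t$: if $|K_{t+1}|<n_t/2$, the factor-$2$ reweighting is not exact. We would handle this by applying the three facts to a version of the set padded to even size, or by absorbing the discrepancy into the constant $c$.
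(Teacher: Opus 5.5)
Your proof is correct. It uses the same ingredients as the paper (iterated halving via Fact~\ref{fact:simbal}, and Jensen plus centering for the denominator), but it organizes the bookkeeping differently.

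The paper turns each round into an attention-level error $4c\sqrt{d}e^{\zeta}/n_s$ measured against the \emph{current} set, and then sums a geometric series of these ratio errors. That requires a lower bound $B(q,K_s)\ge n_s/2$ on every intermediate denominator. Hence it needs Fact~\ref{k_balance} to keep each $K_s$ nearly centered, the condition $|\delta_B|\le B/2$, and $n_s\ge 2c\rho\sqrt{d}\log n$ at every round.

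You instead telescope the additive bounds of Theorem~\ref{thm:A} and Corollary~\ref{cor:B} back to the original scale, $\|A-2^tA_t\|,\ |B-2^tB_t|\le 2^t c\sqrt{d}e^{\zeta}$. You then make a single ratio comparison via $\frac{A}{B}-\frac{A'}{B'}=\frac{A-A'}{B}+\frac{A'}{B'}\cdot\frac{B'-B}{B}$, which divides only by the original $B(q,K)\ge n$. This makes Fact~\ref{k_balance} and all intermediate size conditions unnecessary; simultaneous balancing of $A$ and $B$ alone suffices. What the paper's per-round formulation buys in exchange is a guarantee relative to each intermediate set, which composes without reference to the original data, as in merge-and-reduce.

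The worries in your last paragraph are unfounded:
\begin{itemize}
\item The identity $A+E=2A'$ is exact whatever $|K'|$ is.
\item Since $n_t\le n/2^t$, stopping at the first $t$ with $n_t\le C\sqrt{d}e^{\zeta}/\eps$ gives $n/2^{t-1}\ge n_{t-1}>C\sqrt{d}e^{\zeta}/\eps$. Hence the error is at most $4c\eps/C$, with no padding needed. Your intermediate claim $n_t\ge C\sqrt{d}e^{\zeta}/(2\eps)$ may fail, but it is not used.
\item The ``cleaner alternative'' through $B'\gtrsim n$ is not needed, and as written it silently assumes $2^t n_t\approx n$.
\end{itemize}
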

\begin{proof}
For brevity, denote $A := A(q,K,V)$, $A' := A(q,K',V')$ and analogously $B$ and $B'$. 
Let $\delta_A = 2A' - A$ and $\delta_B = 2B' - B$.
From the above we have that $\|\delta_A\| \le c\sqrt{d}e^\zeta$ and $|\delta_B| \le c\sqrt{d}e^\zeta$
\begin{eqnarray}
\left\| \frac{A'}{B'} - \frac{A}{B}\right\| &=& \left\|\frac{2A'}{2B'} - \frac{A}{B}\right\|  = \left\|\frac{A + \delta_A}{B + \delta_B} - \frac{A}{B}\right\| \\
&=& \left\|\frac{\delta_A B - \delta_B A}{B(B + \delta_B)}\right\| \le \frac{\| \delta_A \|}{B - |\delta_B|} + \left\|\frac{A}{B}\right\|\frac{|\delta_B|}{B - |\delta_B|} \\
&\le& \frac{2\|\delta_A\| + 2|\delta_B|}{B}  \le \frac{4c\sqrt{d}e^{\zeta}}{n}.
\end{eqnarray}

Above we used several facts. First $\|A/B\| \le 1$, this holds for any unit norm values.
Second, $B \ge n/2$. This is true in the first step because $\sum_i k_i = 0$. 
But, in general, we can repeat this compression step $\ell \le \log(n)$ times and due to Fact~\ref{k_balance} at all times $\left\|\sum_i k_i\right\| \le c\sqrt{d}\log(n)$.
Finally, we need that $\delta_B \le B/2$ which holds as long as $n \ge c\sqrt{d}e^\zeta$.

This is just the effect of one compression though. Let's iterate. 
Note that at every step the value of $n$ is cut (at least) in half.
We should stop when the added error is at most $\eps/2$. 
In other words, compress $\ell = \log(\frac{\eps n}{4c\sqrt{d}e^{\zeta}})$ times.
\[
\left\| \frac{A'}{B'} - \frac{A}{B}\right\| \le  \frac{4c\sqrt{d}e^{\zeta}}{n} + \frac{4c\sqrt{d}e^{\zeta}}{n/2} + \frac{4c\sqrt{d}e^{\zeta}}{n/4} + \ldots + \eps/2 \le \eps
\]
This completes the proof and provides a coreset of size at most $n/ 2^{\ell} = O(\frac{\sqrt{d}e^{\zeta}}{\eps})$ such that $
\|\Attn(q,k,v) - \Attn(q,k',v')\| \le \eps.
$
\end{proof}

\paragraph{Comparison to Streaming Attention Approximation
via Discrepancy Theory.}
For comparison to \cite{kochetkova2025streaming} we translate their result to our notation.
As a corollary, applying the above $\ell = \log_2(\frac{\eps n}{4c\sqrt{d}e^{\zeta}})$ times shows that:
\begin{corollary}
For any $(K,V)$ there exist subsets $(K',V')$ of size $O(\frac{\sqrt{d}}{\eps} \cdot e^{\zeta})$ and a constant $c$ 
such that simultaneously for all $\|q\| \le \rho$
\begin{equation}
\|A(q,k,v) - c \cdot A(q,k',v')\| \le \eps n.
\end{equation}

\paragraph{Equation 2 and Theorem 3.1 from \cite{kochetkova2025streaming}}
For any $(K,V)$ there exist subsets $(K',V')$ of size $O(\polylog(n)\frac{\sqrt{d}}{\eps} \cdot e^{2\rho})$ and a constant $c$ such that
simultaneously for all $\|q\| \le \rho$
\begin{equation}
\|A(q,K,V) -c \cdot A(q,K',V')\| \le \eps \sqrt{\sum e^{2q^T k_i}} \cdot \sqrt{\sum \|v_i\|^2}.
\end{equation}

For comparison assume our setting in which $\|v_i\| = 1$ and $\sum k_i = 0$. 
Then $\sum_i \|v_i\|^2 = n$. By Jensen and convexity $\sum_i e^{2q^T k_i} \ge n$. 
So, $\eps \sqrt{\sum (e^{q^T k_i})^2} \cdot \sqrt{\sum \|v_i\|^2} \ge \eps n$. 
The sketch size required is larger by factor of $\polylog(n)e^{\rho\cdot(1-o(1))}$ than our upper bound.
\end{corollary}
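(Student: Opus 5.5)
We prove the corollary by iterating Theorem~\ref{thm:A} together with Fact~\ref{fact:simbal}, keeping track of the scaling constant. Let $K_0 = K$, $V_0=V$ and $n_0=n$. For $j=1,\dots,\ell$ we apply Fact~\ref{fact:simbal} to $(K_{j-1},V_{j-1})$. This gives a subset $(K_j,V_j)$ of size $n_j \le n_{j-1}/2$ such that for all $\|q\|\le\rho$
\[
\|A(q,K_{j-1},V_{j-1}) - 2A(q,K_j,V_j)\| \le c\sqrt{d}\,e^{\zeta}.
\]
Theorem~\ref{thm:A} is stated for arbitrary $(K,V)$, so there is no centering requirement here. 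We do not need the lower bound on $B$ at all, because the corollary controls only the unnormalized numerator against an additive error $\eps n$. We take the scaling constant to be $2^\ell$; the statement's ``constant $c$'' plays exactly this role.

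Next we telescope. Multiply the $j$-th inequality by $2^{j-1}$ and sum over $j$. This gives
\[
\|A(q,K,V) - 2^\ell A(q,K_\ell,V_\ell)\| \le \sum_{j=1}^{\ell} 2^{j-1} c\sqrt{d}\,e^{\zeta} \le 2^{\ell} c\sqrt{d}\,e^{\zeta}.
\]
We stop at $\ell = \log_2\!\big(\frac{\eps n}{4c\sqrt{d}e^{\zeta}}\big)$. The right-hand side is then at most $\eps n/4 \le \eps n$, and the final size is $n_\ell \le n/2^\ell = O(\sqrt{d}e^\zeta/\eps)$.

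The only point needing care is the edge cases. If $\eps n \le 4c\sqrt{d}e^\zeta$, then $\ell\le 0$ and we keep the whole set, which already has size $O(\sqrt{d}e^\zeta/\eps)$. Also, $n_j$ can be less than exactly half of $n_{j-1}$ when sizes are odd. This only helps the size bound, and the error bound of Theorem~\ref{thm:A} is independent of $n_j$, so the geometric sum is unaffected. Beyond this bookkeeping I expect no real obstacle: all the difficulty sits in Theorem~\ref{thm:A}.

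For the comparison paragraph, take $\|v_i\|=1$ and $\sum_i k_i=0$. Then $\sum_i\|v_i\|^2=n$, and Jensen gives $\sum_i e^{2q^Tk_i}\ge n e^{2q^T\bar k}=n$, where $\bar k=\frac1n\sum_i k_i=0$. Hence their error term is at least $\eps n$, so our guarantee is at least as strong. The size ratio is then read off directly: their bound is $\polylog(n)\sqrt{d}e^{2\rho}/\eps$ and ours is $\sqrt{d}e^{\zeta}/\eps$. Since $\zeta=\rho+o(\rho)$, the ratio is $\polylog(n)\,e^{\rho(1-o(1))}$.
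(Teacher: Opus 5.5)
Your proposal is correct and follows the paper's argument: the paper applies the halving bound of Theorem~\ref{thm:A} $\ell=\log_2(\frac{\eps n}{4c\sqrt{d}e^{\zeta}})$ times, and your telescoping with weights $2^{j-1}$ (scaling factor $2^\ell$, total error at most $2^\ell c\sqrt{d}e^{\zeta}\le \eps n/4$) just makes explicit the bookkeeping the paper leaves implicit. The Jensen comparison with \cite{kochetkova2025streaming} is also the paper's, and you are right that neither centering, nor the lower bound on $B$, nor Fact~\ref{fact:simbal} is needed here, since only the numerator is controlled.
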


\section{Bounding the numerator $A(q,K,V)$ approximation}\label{sec:numerator}
In what follows we bound the error incurred by compressing $A(q,K,V) = \sum_i e^{q^T k_i} v_i^T$, proving Theorem~\ref{thm:A}.

\paragraph{The halving trick.}
There is a standard ``trick" to halve the number of items.
Create the error function $E(q,K,V) := \sum_i \sigma_i e^{q^T k_i} v_i^T$ 
by multiplying every summand above with a sign $\sigma_i \in \{-1,1\}$.
Let $(K',V')$ be the set associated with indexes whose $\sigma_i=1$ . Then:
\begin{eqnarray*}
A(q,K,V) + E(q,K,V) =  2A(q,K',V').
\end{eqnarray*}
Note that $(K',V')$ is at most half the size of $(K,V)$. If that's not the case, flip the signs of $\sigma_i$.
At this point, we can now pick arbitrary signs $\sigma$ specifically to minimize $\|E(q)\|$.

\paragraph{Discrepancy argument.}
Here, we use a discrepancy argument to show that there is always some signs $\sigma$ for which $\|E(q)\|$ is small. 
We use a technique similar to \cite{karnin2019discrepancy,kochetkova2025streaming}.
We surgically modify and simplify their arguments to work specifically for this case.

Below we use the Taylor expansion $e^x = \sum_{i=0}^{\infty}x^m/m!$, 
the fact that $||v|| = \max_{u\in\S^{d-1}} v^Tu$, 
that $(q^Tk)^m = \langle q^{\otimes m},k^{\otimes m}\rangle$, and $q^{\otimes m} = \|q\|^m \cdot \hat{q}^{\otimes m}$ where $\hat{q} = q/\|q\|$.
We also define $z(m) := \sqrt{3(m+2)}\log(m+2)$ and $S_m := \sum_i \sigma_i  \frac{k_i^{\otimes m} \otimes v_i}{z(m)}$.

\begin{eqnarray*}
\|E(q)\| &=& \max_{u,q}  \sum_i \sigma_i e^{q^T k_i} v_i^Tu  \\
&=& \max_{q,u} \sum_i \sum_{m=0}^{\infty} \frac{1}{m!} (q^Tk_i)^m v_i^Tu  \\
&=& \max_{q,u} \sum_i \sum_{m=0}^{\infty} \frac{1}{m!} \left\langle q^{\otimes m},  k_i^{\otimes m} \right\rangle v_i^Tu  \\
&=& \max_{q,u}  \sum_{m=0}^{\infty} \frac{1}{m!} \left\langle q^{\otimes m} \otimes u, \sum_i \sigma_i  k_i^{\otimes m} \otimes v_i \right\rangle  \\
&=& \sum_{m=0}^{\infty} \frac{z(m)\|q\|^m}{m!} \max_{\hat{q},u}   \left\langle \hat{q}^{\otimes m} \otimes u, \sum_i \sigma_i  \frac{k_i^{\otimes m} \otimes v_i}{z(m)} \right\rangle \\
&\le& \sum_{m=0}^{\infty} \frac{z(m)\|q\|^m}{m!} \|S_m\|_{\star}
\end{eqnarray*}
Note that $z(m)$ is introduced to multiply and divide the $m$'th summand; the reason for that will become apparent shortly.
We prove below that we can always find $\sigma$ such that for all $m$ simultaneously:
\[
\|S_m\|_{\star} := \max_{\hat{q},u}   \left\langle \hat{q}^{\otimes m} \otimes u, S_m \right\rangle \le c\sqrt{d\log(m)}.
\]
For now, let us first complete the proof assuming this is correct:
\begin{eqnarray*}
\|E(q)\| &\le& c\sqrt{d}\sum_m \frac{z(m) \rho^m \sqrt{\log(m)}}{m!}  \\
&\le& c\sqrt{d} \sum_m \frac{\sqrt{m+2}\log^{3/2}(m+2) \rho^m}{m!} \\
&\le& c\sqrt{d}\cdot e^{\rho+\frac{1}{2}\log(\rho) + O(\log\log(\rho))} = c\sqrt{d}\cdot e^{\zeta}
 \end{eqnarray*}
 Here we use $\|q\| \le \rho$ and $\rho+\frac{1}{2}\log(\rho) + O(\log\log(\rho)) = \zeta$.
This completes the proof.

\textbf{Tensor norm balancing argument.}
We now show that $\|S_m\| \le c\sqrt{dm\log(m)}$ is possible. 
We chose $z(m)$ specifically so that we can apply Banaszczyk's vector balancing theorem  \cite{B98} to the following vectors
\[
x_i = \left[\frac{v_i}{z(0)}  , \frac{\flat( k_i \otimes v_i)}{z(1)} ,\ldots, \frac{\flat(k_i^{\otimes m} \otimes v_i)}{z(m)} , \ldots\right] \ .
\]
For Banaszczyk's theorem to hold we should have $\|x_i\| \le 1$. Our choice of $z(m)$ enforces that 
\[
\|x_i\|^2 = \sum_{m=0}^{\infty} (1/z(m))^2 = \frac{1}{3}\sum_{m=2}^{\infty} \frac{1}{m\log^{2}(m)} \le 1
\]
Moreover, because $S_m := \sum_i \sigma_i  \frac{k_i^{\otimes m} \otimes v_i}{z(m)}$
\[
\sum_i \sigma_i x_i = \left[\flat(S_0)  , \flat(S_1) ,\ldots, \flat(S_m) , \ldots\right] \ .
\]

Banaszczyk's vector balancing theorem to shows there exist signs $\sigma$ such that $\sum_i \sigma_i x_i \in 5\mathcal{K}$ where $\mathcal{K}$ is any convex set of Gaussian Measure at least $1/2$.

We will define convex set $\mathcal{K}$ as follows. 
Let vector $g = [\flat(G_0),\flat(G_1),\flat(G_2),\ldots]$, where $G_0$ is a $d$ dimensional vector, $G_1$ is a $d \times d$ matrix, $G_2$ is $d^3$ 3-tensor, etc.  
\[
\mathcal{K} = \{ g \;\; | \; \forall_m \;\; \|G_m\|_{\star} \le c\sqrt{d\log(m)} \}
\]
Based on Theorem 1 in \cite{tomioka2014spectralnormrandomtensors} 
There exist a constant $c$ such that $\Pr[\|G_m\| \ge c\sqrt{dm\log(m)}] \le 1/20m^2$. 
Note, however, that $\|G_m\|_{\star} \le \|G_m\|$. 
To bound the standard tensor norm, Theorem 1 in \cite{tomioka2014spectralnormrandomtensors} goes through a union bound over an epsilon-net argument over $m$ $d$-dimensional test vectors. 
Here, $\|G_m\|_{\star}$ is constrained to test tensors of the shape  $\hat{q}^{\otimes m}\otimes u$. 
This means the same epsilon-net argument can be made over only two $d$-dimensional test vectors (namely $\hat{q}$ and $u$).
As a result there exist a constant $c$ such that $\Pr[\|G_m\|_{\star} \ge c\sqrt{d\log(m)}] \le 1/20m^2$. 

Union bounding for all $m$ gives that a random Gaussian vector $g$ belongs to $\mathcal{K}$ with probability at least $9/10$.
Therefore, there exist $\sigma$ such that $\sum_i \sigma_i x_i \in 5\mathcal{K}$ and for all $m$ we have $\|S_m\|_{\star} \le 5c\sqrt{d\log(m)}$.

\section{Simultaneous Balancing}\label{sec:simbal}
Here we prove Facts~\ref{k_balance},~\ref{fact:simbal}.

\textbf{Proof of Fact~\ref{k_balance}.}
This is straight forward because $\|k_i\| \le 1$. Banaszczyk's vector balancing theorem applies directly. 
We can define the convex set $\mathcal{K}$ for the Gaussians as the ball of radius $\sqrt{10d}$. 
Applying Markov to the squared norm of $g$, a vector of $d$ Gaussians, we get $\Pr[\|g\|^2 \ge 10d] \le \E[\sum g_i^2] / 10d \le 1/10$. 
There is always a set $K'$ of at most half the key such that $\|\sum_i \sigma_i k_i\|  = \|\sum_i k_i - 2\sum_i k'_i \| \le 5\sqrt{10d}$.

\textbf{Proof of Corrolary~\ref{cor:B}.}
While a coreset for $B(q,K)$ (Corollary~\ref{cor:B}) is already known, it is actually helpful (and trivial) to repurpose our proof above.  
Set $V$ such that all $v_i$ are the vector $[1,0,0,\ldots]$. The value of $B(q,K)$ is simply the first coordinate of $A(q,K,V)$. The same guarantees hold. 

For the proof above to work, however, we need to achieve all three objectives {\it simultaneously}.  
\begin{theorem}[simultaneous balancing]\label{thm:simbal}
Let there be $s$ different sequences of $n$ unit vector $\|x_i^j\| \le 1$. Let the Gaussian measure of $s$ corresponding convex bodies $\mathcal{K}^j$ be at least $1 - \delta_j$. If $\sum_j \delta_j < 1/2$ then there exist signs $\sigma_i$ such that $\sum_i \sigma_i x_i^s \in \sqrt{s}5\mathcal{K}^s$ for all $j \in [s]$.
\end{theorem}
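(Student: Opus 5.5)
We prove Theorem~\ref{thm:simbal} by reducing it to a single application of Banaszczyk's vector balancing theorem in the product space. (The statement writes $x_i^s$ and $\mathcal{K}^s$; we read these as $x_i^j$ and $\mathcal{K}^j$, for every $j\in[s]$.)

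\textbf{Concatenated vectors.} For each $i$, form the concatenation
\[
y_i = \frac{1}{\sqrt{s}}\left[x_i^1, x_i^2, \ldots, x_i^s\right].
\]
Then $\|y_i\|^2 = \frac{1}{s}\sum_j \|x_i^j\|^2 \le 1$, so Banaszczyk's theorem applies to the $y_i$. It gives signs $\sigma$ with $\sum_i \sigma_i y_i \in 5\mathcal{P}$ for any convex body $\mathcal{P}$ in the product space whose standard Gaussian measure is at least $1/2$.

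\textbf{Choice of body.} Take
\[
\mathcal{P} = \left\{ [g^1,\ldots,g^s] \;:\; g^j \in \mathcal{K}^j \ \text{for all } j \right\} = \mathcal{K}^1\times\cdots\times\mathcal{K}^s .
\]
A product of convex sets is convex. A standard Gaussian vector on the product space has independent standard Gaussian blocks $g^j$, each living in the space of $\mathcal{K}^j$. By a union bound, which does not even need independence,
\[
\Pr\left[g\notin\mathcal{P}\right] \le \sum_j \Pr\left[g^j\notin\mathcal{K}^j\right] \le \sum_j \delta_j < 1/2,
\]
so $\mathcal{P}$ has Gaussian measure greater than $1/2$. (Independence would also give the sharper bound $\prod_j(1-\delta_j)$, but the union bound suffices.)

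\textbf{Reading off the conclusion.} Since $\sum_i\sigma_i y_i \in 5\mathcal{P}$, each block satisfies $\frac{1}{\sqrt{s}}\sum_i \sigma_i x_i^j \in 5\mathcal{K}^j$. Hence $\sum_i \sigma_i x_i^j \in \sqrt{s}\,5\mathcal{K}^j$ simultaneously for all $j\in[s]$, which is the statement. The signs are shared across all $j$ because they come from a single balancing of the $y_i$.

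\textbf{Main point to check.} The only real subtlety is that the same $n$ indices must be used across all sequences, so that one sign vector serves every sequence; this holds in the intended application. In Fact~\ref{fact:simbal} we take $s=3$: the $x_i$ of Section~\ref{sec:numerator}, the $x_i$ with $v_i$ replaced by $[1,0,\ldots]$, and $k_i$. Choosing the convex bodies there with $\delta_j \le 1/10$ gives $\sum_j\delta_j = 3/10 < 1/2$. The cost is a constant factor $\sqrt{3}$ in all three bounds, which is absorbed into $c$.
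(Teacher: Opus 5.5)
Your proposal is correct and follows the paper's proof essentially step for step: you concatenate the rescaled vectors $\frac{1}{\sqrt{s}}[x_i^1,\ldots,x_i^s]$, apply Banaszczyk's theorem to the body $\mathcal{K}^1\times\cdots\times\mathcal{K}^s$ (exactly the paper's ``intersection'' of dimension-disjoint bodies), bound its Gaussian measure by a union bound, and read off each block to get the $\sqrt{s}\,5\mathcal{K}^j$ conclusion. Your reading of the typo $x_i^s,\mathcal{K}^s$ as $x_i^j,\mathcal{K}^j$ is also the intended one.
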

\begin{proof}
This is a trivial extension of Banaszczyk's vector balancing theorem. 
Let $x_i = \frac{1}{\sqrt{s}}[x_i^1,x_i^2,\ldots,x_i^s]$ be a concatenation of $s$ vectors. Clearly $\|x_i\| \le 1$.
Since $\sum_j \delta_j \le 1/2$, by the union bound, the gaussian measure of $\mathcal{K} = \cap_i \mathcal{K}^j$ is at least $1/2$.
So, there exist $\sigma$ such that $\sum_i \sigma_i x_i \in 5\mathcal{K}$. 
Since $\mathcal{K}^j$ are dimension disjoint, this also means that, for all $j \in [s]$, $\sum_i  \sigma_i \frac{1}{\sqrt{s}} x_i^j \in 5\mathcal{K}^j$.
\end{proof}
In our case, use $s=3$, one for $A(q,K,V)$, one for $B(q,K)$ and one for balancing $K$. From the proofs above, for all three, $\delta_j < 1/10$.
Applying the above, we can achieve all three objectives with a single set of signs $\sigma_i$ at the cost of increasing the constant by a factor of $\sqrt{3}$. 

\section{Lower Bounds}
We now prove the following two lower bounds.
Essentially, the first theorem proves a bound of $\Omega(de^{\rho}/\eps)$ {\em bits} required to store a sketch for the Attention computation. The second theorem proves a bound of $\Omega(\sqrt{d}e^\rho/\eps)$ on the size of the coreset (Theorem~\ref{thm:mainLB} from the introduction).

We use the following standard spherical code fact \cite{conway1999sphere}: there is an absolute constant $a>0$ such that, for every $0<\eta\le 1/2$ and every $m\le \exp(a\eta^2 d_k)$, there are unit vectors $u_1,\ldots,u_m\in\R^{d_k}$ with $\langle u_i,u_h\rangle\le \eta$ for every $i\ne h$.

\begin{theorem}
Fix dimensions $d,d_k>1$, $\rho\ge 2$, let $L=e^\rho$, and let $0<\eps\le \eps_0$ for a sufficiently small absolute constant $\eps_0$. Suppose there is a sketching algorithm that, for any input $K\subset \R^{d_k},V\subset\R^d$ with $\|k_i\|_2,\|v_i\|_2\le 1$, produces a sketch $\Attn'(\cdot)$ of size $S$ so that for any $q\in \R^{d_k}$ of norm at most $\rho$, with constant probability,
\[
\|\Attn(q,K,V)-\Attn'(q)\|_2\le \eps .
\]
Then
\[
S=\Omega\!\left(d\cdot \min\{L/\eps,\; L^2,\; \exp(a d_k/\log^2 L)\}\right).
\]
In particular, it is enough to take $d_k=\Omega(\log^2 L\cdot\log(L/\eps))$ for the key dimension not to be the bottleneck.
\end{theorem}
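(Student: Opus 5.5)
We argue by a communication/encoding reduction: we build a family of inputs $(K,V)$ from which an arbitrary string of $\Omega(d\cdot N)$ bits can be decoded by querying the sketch. Here $N$ is the number of ``blocks'' of keys we can hide:
\[
N \approx \min\{L/\eps,\; L^2,\; \exp(a d_k/\log^2 L)\}.
\]
Since the guarantee holds only with constant probability per query, we will make the decoding robust. Either we decode each bit with constant advantage and use a standard information-theoretic (Fano / index-type) argument, or we reduce error by taking a median over independent sketches, which costs only constant factors. The sketch then needs $\Omega(dN)$ bits.

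\textbf{Construction.} Set $\eta = c_1/\log L$ and take a spherical code $u_1,\ldots,u_N\in\R^{d_k}$ with $\langle u_i,u_h\rangle\le\eta$. This is possible for $N\le \exp(a\eta^2 d_k)$, which gives the third term of the minimum. Key $u_j$ receives a multiplicity $w$ and a value vector $v_j\in\{\pm \beta\}^d/\sqrt d$ encoding $d$ bits, with $\beta\le 1$ to be chosen. We also add a large ``background'' mass of $M$ keys, placed for instance at a vector orthogonal to all the $u_j$ (using an extra coordinate), with value $0$.

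Now query $q=\rho u_j$. The weight of block $j$ is $w e^{\rho}=wL$. Every other block contributes at most $w e^{\rho\eta}=w\,e^{O(1)}$, since $\rho\eta=O(1)$ by the choice of $\eta$. The background contributes $M$. Hence
\[
\Attn(q) \;=\; \frac{wL\,v_j + \sum_{h\neq j} w e^{\rho\langle u_j,u_h\rangle}\, v_h}{wL + O(wN) + M}.
\]

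\textbf{Parameter choice.} We need the signal $v_j$ to dominate the interference so that each coordinate's sign is recoverable from an $\eps$-accurate output. The cross-term $\sum_h e^{\rho\langle u_j,u_h\rangle}v_h$ can be as large as $O(N)$ in norm in the worst case. To control it we use the random signs in the $v_h$: the cross term is then a random-sign sum whose per-coordinate magnitude is only $O(\sqrt{N}/\sqrt d)$. This means the decoding succeeds for most $j$ and most coordinates, which suffices for an averaged encoding argument. The intended choices are:
\begin{itemize}
\item the denominator is normalized by choosing $M\approx wL/\eps$, so the fraction of attention on block $j$ is about $\eps$;
\item $\beta$ is chosen so that an $\eps$-accurate output still reveals the coordinate signs of $v_j$ at the scale $\eps/\sqrt d$.
\end{itemize}
The condition $N\lesssim L/\eps$ arises from the requirement that $N$ blocks each have weight $\gtrsim$ their $L$-fraction relative to the background. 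The condition $N\lesssim L^2$ arises from the requirement that the random-sign interference $\sqrt N$ stays below the signal $L$.

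\textbf{Main obstacle.} The hardest step is the precise balancing of the three constraints against the $\ell_2$ error budget. Because the error is measured in $\ell_2$ over $d$ coordinates, an $\eps$-accurate answer only guarantees that most coordinates are correct. We therefore plan to encode each block's $d$ bits with a constant-rate binary code (Gilbert–Varshamov), so that a $v_j$ recovered with $\ell_2$ error below half the code's minimum distance still determines the codeword. This yields $\Omega(d)$ bits per block and $\Omega(dN)$ bits in total. Finally, the condition $d_k=\Omega(\log^2L\log(L/\eps))$ follows by making the third term of the minimum at least the first.
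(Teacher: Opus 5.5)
Your reduction is the paper's in its essentials. Both use spherical-code keys with $\eta=1/\log L$, so off-diagonal weights are at most $e^{\rho\eta}=e$. Both hide $d$ bits per value under random signs and query $q=\rho u_i$. Both get $N\lesssim L^2$ from a second-moment bound on the sign-randomized cross term, $N\lesssim L/\eps$ from dilution of the weight $L$ by the other keys, and finish with an $\Omega(dN)$ information bound.

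The difference is in decoding. The paper stores $b_{ij}x_{ij}/\sqrt d$ with $x_{ij}\in\{0,1\}$ and thresholds one coordinate at $L/(2\sqrt d B)$; Bob knows $B$ because the keys do not depend on the data. It then uses Markov over a uniformly random coordinate to turn the $\ell_2$ guarantee into a per-coordinate error of $10\eps/\sqrt d$, and quotes the one-way \textsc{Index} bound. You instead store a Gilbert--Varshamov codeword in $\{\pm1\}^d/\sqrt d$, rescale the output by the known denominator, decode the whole block by nearest codeword, and finish with Fano over blocks. That also works. It recovers a block exactly rather than a random bit, at the cost of a code and a Fano argument in place of an off-the-shelf \textsc{Index} bound. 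In this route constant repetition only boosts each query to constant success, which Fano tolerates. The ``median'' alternative alone would need $O(\log N)$ repetitions to cover all $N$ queries at once.

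Two points in your parameter paragraph need fixing. First, the background mass is unnecessary and only hurts. The $L/\eps$ term already comes from the other $N$ blocks diluting the denominator, exactly as in the paper's $B\le L+em$. So set $M=0$; the multiplicity $w$ then cancels and you have the paper's instance.

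Second, and more important, a ``fraction of attention on block $j$ about $\eps$'' is not enough. Since $\beta\le 1$, $\beta$ cannot supply slack, so take $\beta=1$. Let $D$ be the denominator. If the signal $wLv_j/D$ has norm at most $\eps$, a sketch answering $\Attn(q)-wLv_j/D$ is $\eps$-accurate yet carries nothing about $v_j$. You need $wL/D$ to be a large constant times $\eps$, as in the paper's $B\le L/(50\eps)$. That means $N$, and $M$ if you keep it, must be at most a small constant times $L/\eps$.

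Also make the signs an explicit public mask, $v_h=(r_h\odot c(X_h))/\sqrt d$. Then $\E\bigl\|\sum_{h\ne j}e^{\rho\langle u_j,u_h\rangle}v_h\bigr\|^2\le e^2N$ holds regardless of which codewords are stored.
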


\begin{proof}
We reduce from one-way \textsc{Indexing}. By repeating the sketch a constant number of times, we may assume its success probability is at least $0.99$, changing $S$ by only a constant factor. Set $\eta=1/\log L$. Since $\rho=\log L\ge 2$, we have $\eta\le 1/2$. Let
\[
m=\left\lfloor c\min\{L/\eps,\; L^2,\; \exp(a d_k/\log^2 L)\}\right\rfloor
\]
for a sufficiently small absolute constant $c$, and consider an \textsc{Index} instance $x\in\{0,1\}^{md}$. Let $u_1,\ldots,u_m$ be the codewords from the coding fact above. We identify a bit position with a pair $(i,j)\in[m]\times[d]$. Using public randomness, Alice and Bob draw signs $b_{ij}\in\{\pm1\}$. Alice constructs $m$ key-value pairs
\[
k_i=u_i\in\R^{d_k},
\qquad
v_i=\frac{1}{\sqrt d}(b_{i1}x_{i1},\ldots,b_{id}x_{id})\in\R^d .
\]
Clearly $\|v_i\|_2\le 1$. Alice feeds these pairs to the sketching algorithm and sends the resulting sketch to Bob.

To recover bit $(i,j)$, Bob asks the query $q=\rho u_i$, whose norm is $\rho$. Write
$
w_{ih}=\exp(\rho\langle u_i,u_h\rangle).$
Then $w_{ii}=L$ and $w_{ih}\le e$ for $h\ne i$. Let
$
B=\sum_{h=1}^m w_{ih}
$
be the softmax denominator for this query. The $j$th coordinate of the true attention, after multiplying by the public sign $b_{ij}$, is
\[
b_{ij}\Attn(q,K,V)_j
=
\frac{Lx_{ij}/\sqrt d + N_{ij}}{B},
\qquad
N_{ij}:=\frac{1}{\sqrt d}\sum_{h\ne i} w_{ih}b_{ij}b_{hj}x_{hj}.
\]
The first term is the signal. The second term is noise from all other keys. Over the public signs, $\mathbb{E}N_{ij}=0$ and $\mathbb{E}N_{ij}^2\le e^2m/d$, so Chebyshev gives
$
\Pr\left[|N_{ij}|>L/(10\sqrt d)\right]\le 100e^2m/L^2 .
$
Our choice of $m\le cL^2$ makes this probability a small constant.

On the event $|N_{ij}|\le L/(10\sqrt d)$, the two cases are separated by a gap: if $x_{ij}=1$ then
\[
b_{ij}\Attn(q,K,V)_j \ge \frac{9L}{10\sqrt d\,B},
\]
whereas if $x_{ij}=0$ then
\[
\left|b_{ij}\Attn(q,K,V)_j\right|\le \frac{L}{10\sqrt d\,B}.
\]
Thus thresholding at $L/(2\sqrt d\,B)$ recovers the bit as long as the sketch error in coordinate $j$ is at most $L/(5\sqrt d\,B)$.

It remains to check that the promised $\ell_2$ error gives such a coordinate bound for a random \textsc{Index} query. Conditioned on the sketch succeeding, if $(i,j)$ is uniform in $[m]\times[d]$, then $j$ is uniform in $[d]$, and Markov's inequality gives
\[
\Pr_j\left[
\left(\Attn'(q)_j-\Attn(q,K,V)_j\right)^2>(10\eps)^2/d
\right]\le 1/100 .
\]
Finally, since $B\le L+em$, $m\le cL/\eps$, $\eps\le\eps_0$, and $c,\eps_0$ are sufficiently small, we have $B\le L/(50\eps)$, and hence 
$10\eps/\sqrt d \le L/(5\sqrt d\,B).$
Therefore Bob recovers a uniformly random bit of $x$ with constant probability. A sketch of size $o(md)$ would give a one-way protocol for \textsc{Index} on $md$ bits using $o(md)$ communication, contradicting the standard $\Omega(md)$ lower bound. Hence $S=\Omega(md)$, which is the claimed bound.
\end{proof}

\begin{theorem}
\label{thm:coresetMainLB}
Fix dimensions $d,d_k>1$, $\rho\ge 2$, let $L=e^\rho$, and let $0<\eps\le \eps_0/\sqrt d$ for a sufficiently small absolute constant $\eps_0$. Suppose there is a coreset algorithm that, for any input $K\subset \R^{d_k},V\subset\R^d$ with $\|k_i\|_2,\|v_i\|_2\le 1$, produces a subset $K',V'$ of size $S$, so that for any $q\in \R^{d_k}$ of norm at most $\rho$, with constant probability,
\[
\|\Attn(q,K,V)-\Attn(q,K',V')\|_2\le \eps .
\]
Then
\[
S=\Omega\!\left(d\cdot \min\{L/(\eps\sqrt d),\; L^2,\; \exp(a d_k/\log^2 L)\}\right).
\]
In particular, it is enough to take $d_k=\Omega(\log^2 L\cdot\log(L/(\eps\sqrt d)))$ for the key dimension not to be the bottleneck.
\end{theorem}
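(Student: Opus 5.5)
The plan is to rerun the \textsc{Indexing} reduction from the previous theorem with the same instance, and to extract the extra $1/\sqrt d$ saving in the $L/\eps$ term from the fact that a coreset can only \emph{keep or drop} pairs; it cannot re-encode them.

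\textbf{Step 1: the instance.} As before, set $\eta=1/\log L$ and take codewords $u_1,\ldots,u_m$ with pairwise inner products at most $\eta$. Put $k_i=u_i$ and $v_i=\frac{1}{\sqrt d}(b_{i1}x_{i1},\ldots,b_{id}x_{id})$. Now choose
\[
m=\left\lfloor c\min\{L/(\eps\sqrt d),\,L^2,\,\exp(a d_k/\log^2 L)\}\right\rfloor .
\]

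\textbf{Step 2: the key observation for coresets.} Query $q=\rho u_i$. If the coreset drops pair $i$, then every retained key $h$ has weight $w_{ih}\le e$. The coreset output is therefore an average of retained $v_h$'s with nearly uniform weights. Its $j$-th coordinate is a combination of terms $b_{hj}x_{hj}/\sqrt d$, $h\ne i$, none of which depends on $x_{ij}$.

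The true attention, by contrast, has the signal term $Lx_{ij}/(\sqrt d B)$ with $B\le L+em$. For $m\le cL/(\eps\sqrt d)$ and $\eps\le\eps_0/\sqrt d$, we get $B\le 2L$ whenever $em\le L$. More generally the signal is of order $L/(\sqrt d B)\ge 10\eps$ in coordinate $j$.

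Sum the squared coordinate signals over all $j$ with $x_{ij}=1$. If row $i$ has $\Omega(d)$ ones (true for random $x$ with high probability), the dropped item forces $\ell_2$ error of order $\sqrt d\cdot L/(\sqrt d B)=L/B$, up to noise. Using the Chebyshev bound on $N_{ij}$ from the previous proof together with $m\le cL^2$, this exceeds $\eps$. Here the assumption $\eps\le \eps_0/\sqrt d$ is what guarantees $L/B\gtrsim$ const $\gg\eps$ even in the regime $m\approx L/(\eps\sqrt d)$.

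Hence, with constant probability over the query distribution, a valid coreset must retain essentially every index $i$. Formally: for a uniformly random $i$, the coreset contains $(k_i,v_i)$ with probability $\ge 0.9$. This gives $S=\Omega(m)$.

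\textbf{Step 3: turning $\Omega(m)$ into $\Omega(md)$.} This is where I expect the main obstacle, since $S$ counts pairs, not bits. I would handle it by replicating. Each logical row $i$ becomes $d$ copies, or more generally $t$ copies chosen to balance the $L/(\eps\sqrt d)$ and $L^2$ terms. The copies share the same key $u_i$ but carry values that split the row's information, for example $v_{i,j}=b_{ij}x_{ij}e_j$. The query $\rho u_i$ then sees the average of these $d$ pairs, whose $\ell_2$ norm is about $1/\sqrt d$ times the density of ones.

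Dropping a constant fraction of a block perturbs the average by $\Omega(1/\sqrt d)\cdot(L/B)$, which is at least $\eps$ precisely because $\eps\le\eps_0/\sqrt d$. So the coreset must keep a constant fraction of each block. Total retained size is then $\Omega(md)$ with $m$ as above, matching the claimed bound.

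The delicate points are two. First, the denominator $B$ now includes the multiplicity $d$, so $B\approx dL+edm$ and the conditions on $m$ must be rechecked; I would recheck the ratio rather than the absolute sizes, since multiplicity scales numerator and denominator alike. Second, a coreset could drop copies whose values carry little mass, so the argument needs random $x$ with $\Omega(d)$ ones per block, and the Markov/union step must control the fraction of blocks where the guarantee fails.
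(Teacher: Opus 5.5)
\textbf{There is a genuine gap in Step 3.} Your instance there ($k_{i,j}=u_i$, $v_{i,j}=b_{ij}x_{ij}e_j$, with $m$ as stated) is exactly the right one. What does not follow is the step that turns it into $S=\Omega(md)$. From ``dropping a constant fraction of a block perturbs the average'' you conclude that the coreset must keep a constant fraction of each block. But for the query $\rho u_i$ the coreset outputs
\[
\Attn(\rho u_i,K',V')=\frac{1}{B_S}\sum_{(h,j)\in S} w_{ih}v_{h,j},\qquad B_S=\sum_{(h,j)\in S}w_{ih}.
\]
This is not the true attention with some summands deleted. The normalization is different, and retained pairs $(h,j)$ with $h\ne i$ put mass into exactly the coordinates $j$ where block $i$ was thinned. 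The algorithm picks $S$ after seeing all of $x$, so it may choose those other pairs, with their signs $b_{hj}$, to imitate the missing signal $b_{ij}L/B$.

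Your argument never excludes this, and excluding it would take a real argument over all queries at once. The weights $w_{ih}$ are only known to lie in $[1/L,e]$, because the code bounds inner products from above only. So they are not ``nearly uniform'', and the compensation can differ from query to query. The same objection undermines Step 2: the output ``not depending on $x_{ij}$'' fails because the selection does depend on it. Step 2 also only yields $\Omega(m)$. The delicate points you flag (multiplicity in $B$, density of ones, Markov over blocks) are not where the difficulty lies.

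\textbf{The missing idea is the paper's: make no claim about which pairs survive.} Treat the coreset itself as Alice's one-way message. Bob knows the code and the public signs. From the retained indices and their bits $x_{ij}$ he rebuilds $(K',V')$ and evaluates $\Attn(\rho u_i,K',V')$. He multiplies coordinate $j$ by $b_{ij}$ and thresholds at $L/(2B)$, where $B=d\sum_h w_{ih}$. Three estimates make this work:
\begin{itemize}
\item Chebyshev, with $\E N_{ij}^2\le e^2m$ and $m\le cL^2$, keeps the cross-block noise below $L/(10B)$.
\item Markov over the uniform $j$ turns the $\ell_2$ guarantee into coordinate error at most $10\eps/\sqrt d$.
\item $B\le d(L+em)\le L\sqrt d/(50\eps)$, using $m\le cL/(\eps\sqrt d)$ and $\eps\sqrt d\le\eps_0$, gives $10\eps/\sqrt d\le L/(5B)$.
\end{itemize}
So a uniformly random bit is recovered with constant probability. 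Whatever compensation the coreset performs is irrelevant, since only the accuracy of its output is used.

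To finish, count the cost of the message. Describing the coreset takes at most $\log_2\binom{md}{S}+S\le S\log_2(e\,md/S)+S$ bits. This is $o(md)$ whenever $S=o(md)$, which contradicts the $\Omega(md)$ lower bound for \textsc{Indexing}.
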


\begin{proof}
The reduction is again from one-way \textsc{Indexing}. By constant repetition we may assume the coreset succeeds with probability at least $0.99$. Set $\eta=1/\log L$. Since $\rho=\log L\ge 2$, we have $\eta\le 1/2$. Let
\[
m=\left\lfloor c\min\{L/(\eps\sqrt d),\; L^2,\; \exp(a d_k/\log^2 L)\}\right\rfloor
\]
for a sufficiently small absolute constant $c$, and consider an \textsc{Indexing} instance $x\in\{0,1\}^{md}$. Let $u_1,\ldots,u_m$ be the codewords from the coding fact. We identify a bit position with a pair $(i,j)\in[m]\times[d]$. Using public randomness, Alice and Bob draw signs $b_{ij}\in\{\pm1\}$. Alice constructs $md$ key-value pairs indexed $(i,j)\in [m]\times[d]$:

\[
k_{i,j}=u_i\in\R^{d_k},
\qquad
v_{i,j}=b_{ij}x_{ij}e_j\in\R^d .
\]
Clearly $\|v_{i,j}\|_2\le 1$. Alice runs the coreset algorithm and sends the resulting coreset to Bob.

To recover bit $(i,j)$, Bob asks the query $q=\rho u_i$. As above, write $w_{ih}=\exp(\rho\langle u_i,u_h\rangle)$, so $w_{ii}=L$ and $w_{ih}\le e$ for $h\ne i$. The softmax denominator for the original instance is now
$
B=d\sum_{h=1}^m w_{ih},
$
because there are $d$ pairs for each codeword. After multiplying coordinate $j$ by the public sign $b_{ij}$, the true attention value is
\[
b_{ij}\Attn(q,K,V)_j
=
\frac{Lx_{ij}+N_{ij}}{B},
\qquad
N_{ij}:=\sum_{h\ne i} w_{ih}b_{ij}b_{hj}x_{hj}.
\]
Over the public signs, $\mathbb{E}N_{ij}=0$ and $\mathbb{E}N_{ij}^2\le e^2m$, so
$
\Pr\left[|N_{ij}|>L/10\right]\le 100e^2m/L^2 .
$
Since $m\le cL^2$, this failure probability is a small constant.

On the event $|N_{ij}|\le L/10$, the two bit values are separated: if $x_{ij}=1$ then
\[
b_{ij}\Attn(q,K,V)_j \ge \frac{9L}{10B},
\]
whereas if $x_{ij}=0$ then
\[
\left|b_{ij}\Attn(q,K,V)_j\right|\le \frac{L}{10B}.
\]
Thus thresholding at $L/(2B)$ recovers the bit provided the coreset approximation has coordinate error at most $L/(5B)$.

As before, the $\ell_2$ approximation guarantee implies the desired coordinate guarantee for a random \textsc{Indexing} query. Conditioned on the coreset succeeding, if $(i,j)$ is uniform in $[m]\times[d]$, then $j$ is uniform in $[d]$, so Markov's inequality gives
\[
\Pr_j\left[
\left|\Attn(q,K,V)_j-\Attn(q,K',V')_j\right|^2>(10\eps)^2/d
\right]\le 1/100 .
\]
Finally, since $B\le d(L+em)$, $m\le cL/(\eps\sqrt d)$, and $\eps\le\eps_0/\sqrt d$, for $c,\eps_0$ sufficiently small,
$
B\le \frac{L\sqrt d}{50\eps},
$
and therefore
$
10\eps/\sqrt d\le L/(5B).
$
Bob can therefore recover a uniformly random bit of $x$ with constant probability from the coreset. If the coreset had size $o(md)$, then Alice could send the indices of the selected pairs and corresponding $x_{i,j}$'s (using $o(md)$ retained pairs of information), and Bob could run the above construction and the decoder. This contradicts the standard $\Omega(md)$ lower bound for \textsc{Indexing}. Hence $S=\Omega(md)$, which is the claimed bound.
\end{proof}

\section{Discussion}
In this paper we prove the best possible coreset size is essentially $\tfrac{\sqrt{d}}{\eps}e^\zeta$.
The upper bound is off by a factor of $O(e^{\frac{1}{2}\log(\rho) + O(\log\log(\rho))}) = O(\rho^{1/2 + o(1)})$.
To qualify how ``nearly optimal" our results is, consider the following. 
Based on the lower bound, coreset are only viable when $\zeta \le \log(\eps n/\sqrt{d})$ 
because otherwise $c\sqrt{d}e^\zeta /\eps > n$ which mean our coreset is larger than the data. 
So, in the worst case, the gap between the upper and lower bound is roughly $\sqrt{\log(n)}$.
We conjecture that closing this gap will require completely different approaches. 

\bibliographystyle{plain}
\bibliography{attention_coresets_arxiv}

\end{document}